\definecolor{webgreen}{rgb}{0,.5,0}
\definecolor{webbrown}{rgb}{.6,0,0}
\author{Jean-Paul Allouche \\
CNRS, IMJ-PRG, UPMC \\
UPMC, Case 247, 4 Place Jussieu \\
F-75252 Paris Cedex 5 \\
France \\
{\tt jean-paul.allouche@imj-prg.fr} \\
\and
Julien Cassaigne \\
CNRS, Institut de Math. de Marseille \\
Campus de Luminy \\
F-13288 Marseille Cedex 9 \\
France \\
{\tt julien.cassaigne@math.cnrs.fr}
\and
Jeffrey Shallit \\
School of Computer Science \\
University of Waterloo \\
Waterloo, ON  N2L 3G1 \\
Canada \\
{\tt shallit@cs.uwaterloo.ca} 
\and 
Luca Q. Zamboni \\
Institut Camille Jordan \\
Universit\'e Claude Bernard, Lyon 1 \\
43 Boulevard \ du 11 novembre 1918 \\
F-69622 Villeurbanne Cedex \\
France \\
{\tt lupastis@gmail.com}
}
\title{A Taxonomy of Morphic Sequences}
\begin{document}

\maketitle

\theoremstyle{plain}
\newtheorem{theorem}{Theorem}
\newtheorem{corollary}[theorem]{Corollary}
\newtheorem{lemma}[theorem]{Lemma}
\newtheorem{proposition}[theorem]{Proposition}

\theoremstyle{definition}
\newtheorem{definition}[theorem]{Definition}
\newtheorem{example}[theorem]{Example}
\newtheorem{conjecture}[theorem]{Conjecture}

\theoremstyle{remark}
\newtheorem{remark}[theorem]{Remark}

\def\Que{{\mathbb{Q}}}

\begin{abstract}
In this note we classify sequences according to whether they
are morphic, pure morphic, uniform morphic, pure uniform morphic,
primitive morphic, or pure primitive morphic, and for each possibility
we either give an example or prove that no example is possible.
\end{abstract}

\section{Introduction}
Sequences (also known as right-infinite words; we use the terms
interchangeably) that arise from the iteration of a morphism appear in many
different places in the mathematical and theoretical computer science
literature.  To name just a few examples, they play an important role in the
theory of words avoiding patterns \cite{Thue:1906,Thue:1912,Berstel:1992},
in the theory of Sturmian sequences \cite{Berstel&Seebold:2002}, in
L-systems used in computer graphics \cite{Prusinkiewicz&Lindenmayer:1990},
and in the theory
of algebraic series in positive characteristic
\cite{Christol:1979,Christol&Kamae&MendesFrance&Rauzy:1980}.

Different kinds of morphisms give rise to different kinds of sequences,
with different properties.
In this paper we consider some of these variations and properties and
classify sequences according to whether they do, or do not, satisfy
these properties.    The goal is to create a relatively complete
taxonomy of the different kinds of behavior exhibited my morphic
sequences, as well as to illustrate the many different techniques that
can be used to show that a sequence exhibits, or does not
exhibit, a certain property.

Let us fix our notation.  Let $\Sigma^*$ denote the set of all finite words
over the alphabet $\Sigma$, including the empty word $\epsilon$, and let
$\Sigma^\omega$ denote the set of all right-infinite words over $\Sigma$.
We write $\Sigma^\infty = \Sigma^* \ \cup \ \Sigma^\omega$.
Let $h:\Sigma^* \rightarrow \Sigma^*$ be a
morphism, that is, a map obeying $h(xy) = h(x) h(y)$ for all words
$x, y \in \Sigma^*$.  From the definition, it suffices to define
$h$ on each element of $\Sigma$.  A morphism $h:\Sigma \rightarrow \Delta$
is said to be {\it non-erasing} if $h(a) \not= \epsilon$ for all
$a \in \Sigma$.  A letter $a$ is said to be {\it growing\/} for $h$ if
$\lim_{n \rightarrow \infty} |h^n (a)| = \infty$; otherwise it is
{\it non-growing}.

We can define $h$ on
$\Sigma^\omega$ in the obvious way:  $h(a_1 a_2 \cdots) = 
h(a_1) h(a_2) \cdots$.

We can iterate $h$, writing $h^2$ for the
composition $h \circ h$, $h^3$ for $h \circ h \circ h$, etc.
If there exists a letter $a \in \Sigma$ and
$x \in \Sigma^*$ such that
\begin{enumerate}
\item $h(a) = ax$; and
\item $h^i(x) \not= \epsilon$ for all $i \geq 0$,
\end{enumerate}
then $h$ is said to be {\it prolongable on $a$}.
In this case,
iterating $h$ on $a$ produces a sequence of words of increasing 
length,
$$a, h(a), h^2(a), \ldots$$
where each word is a proper prefix of the word that follows.
In the limit, this tends to the infinite word
$$h^\omega(a) := a \, x \, h(x) \, h^2(x) \, h^3(x) \cdots \in \Sigma^\omega , $$
which is a fixed point of $h$; that is,
$$ h(h^\omega(a)) = h^\omega(a).$$
We say that the infinite word $h^\omega(a)$ is {\it generated} by $h$.

If $w$ is a finite word, then $|w|$ denotes the length of $w$.
If $w$ is a finite or infinite word, then $w[n]$ denotes the
$n$'th symbol of $w$, and $w[i..j]$ represents the factor
of $\bf w$ beginning at position $i$ and ending at position $j$.

If $x = uvw$, then we say that $u$ is a {\it prefix\/} of $x$, that
$w$ is a {\it suffix\/} of $x$, and $v$ is a {\it factor\/} of $x$.

Let $\Sigma_k = \{ 0,1,\ldots, k-1 \}$.  If $w \in \Sigma_k^*$, we
let $[w]_k$ denote the integer represented by $w$ in base $k$.

We start by discussing different kinds of morphisms.

\subsection{Pure morphic words}

If $\bf w$ is an infinite word over $\Sigma$ and
there exists a morphism $h$, prolongable on $a$, such that
${\bf w} = h^\omega (a)$, then $\bf w$ is said to be {\it pure morphic}.

\begin{example}
One of the most famous pure morphic words is the {\it Fibonacci word}
$${\bf f} =  01001010 \cdots,$$
which is generated by the morphism $\varphi$ defined by
$$0 \rightarrow 01; \quad\quad\quad 1 \rightarrow 0.$$
See, e.g., \cite{Berstel:1980,Berstel:1986}.
\label{fib}
\end{example}

\subsection{Pure uniform morphic words}

A morphism $h$ is said to be {\it $k$-uniform}
if $|h(a)| = k$ for all $a \in \Sigma$.
It is said to be {\it uniform} if it is uniform for some $k \geq 2$.
If an infinite word is generated by a uniform prolongable morphism,
then it is said to be {\it pure uniform morphic}.  

\begin{example}
One of the most
famous pure uniform morphic words is the {\it Thue-Morse word}
$$ {\bf t} = 01101001 \cdots,$$
which is generated by the morphism $\mu$ defined by
$$0 \rightarrow 01; \quad\quad\quad 1 \rightarrow 10.$$
See, for example, \cite{Thue:1906,Thue:1912,Berstel:1995,Allouche&Shallit:1999}.
\end{example}

\subsection{Morphic words}

We can also apply a coding (a $1$-uniform morphism from $\Sigma$ to
a possibly different alphabet $\Delta$) to a morphic word.  This has
the effect of renaming the symbols.  One way to think about this is
to give letters in $\Sigma$ subscripts, and then the effect of $\tau$
is to erase the subscripts.   We use the shorthand
$a_1 a_2 \cdots a_n \rightarrow b_1 b_2 \cdots b_n$ to represent
the coding $a_i \rightarrow b_i$ for $i = 1, 2, \ldots n$.
If an infinite word is expressible as the coding of a pure morphic word,
it is said to be ${\it morphic}$.  

\begin{example}
The Fibonacci numbers are defined by $F_0 = 0$, $F_1 = 1$, and
$F_n = F_{n-1} + F_{n-2}$.  The {\it characteristic word} $\chi_F 
= (\chi (n))_{n \geq 0}$ of
the Fibonacci numbers is defined by
\begin{equation}
\chi (n) = \begin{cases} 
	1, & \text{ if $n = F_i$ for some $i \geq 0$;}\\
	0, & \text{ otherwise.}
	\end{cases}
\end{equation}
Thus 
$$\chi = 11110100100001 \cdots .$$
This word is morphic, as it is generated by the morphism sending
\begin{eqnarray*}
c_0 & \rightarrow & c_0 e_0 \\
e_0 & \rightarrow & c_1 \\
c_1 & \rightarrow & c_1 e_1 \\
e_1 & \rightarrow & c_2 \\
c_2 & \rightarrow & c_2 e_1 
\end{eqnarray*}
followed by the coding $c_0 e_0 c_1 e_1 c_2 \rightarrow 11100$.
See, for example, \cite[Example 7, p.~168]{Cobham:1972}.
\end{example}

\subsection{Uniform morphic words}

If an infinite word is expressible as the coding of
a pure uniform morphic word, it is said to be {\it uniform morphic}.
From Cobham's theorem \cite{Cobham:1972}, we know that an infinite word
is uniform morphic if and only if it is automatic.  (A $k$-automatic
word is one where the $n$'th term is computed by a finite automaton
reading the base-$k$ expansion of $n$ as input and producing an output
associated with the last state reached; a word is said to be automatic
if it is $k$-automatic for some integer $k \geq 2$.)

\begin{example}
Consider the uniform morphic word $\bf s$ generated by the morphism
\begin{eqnarray*}
a & \rightarrow & ab \\
b & \rightarrow & ac \\
c & \rightarrow & db \\
d & \rightarrow & dc .
\end{eqnarray*}
Apply the coding sending $abcd \rightarrow 0011$ to $\bf s$;  the result is
$$ {\bf r} = 0001001000011101 \cdots,$$
the {\it Golay-Rudin-Shapiro} sequence.  (See, for example,
\cite{Golay:1949,Golay:1951, Shapiro:1951,Rudin:1959}.)
It is not hard to see that
${\bf r}[n]$ equals the parity of the number of (possibly overlapping)
occurrences of $11$ in the binary expansion of $n$.
\label{rudin}
\end{example}

\subsection{Pure primitive morphic words}

A morphism $h:\Sigma^* \rightarrow \Sigma^*$
is called {\it primitive} if there exists an integer $n \geq 1$
such that for all $a \in \Sigma$, the word $h^n(a)$ contains at least
one occurrence of each symbol in $\Sigma$.  For example,
the Fibonacci morphism $\varphi$ is primitive, since
$\varphi^2 (0) = 010$ and $\varphi^2 (1) = 01$.
An infinite word $\bf w$ is {\it pure primitive morphic} if it is the fixed
point of some primitive morphism.

\begin{example}
Consider the morphism $\gamma$ defined by
\begin{align*}
\gamma(0) &= 03 \\
\gamma(1) &= 43 \\
\gamma(3) &= 1 \\
\gamma(4) &= 01.
\end{align*}
Then 
\begin{align*}
\gamma^6(0) &= 03143011034343031011011 \\
\gamma^6(1) &= 03143011031011011031011011 \\
\gamma^6(3) &= 03143034343034343 \\
\gamma^6(4) &= 0314301103434303143034343034343 , 
\end{align*}
so $\gamma$ is primitive.
The infinite word $0314301103434303101101103\cdots$ generated
by $\gamma$ is thus pure primitive morphic.  It was studied
in \cite{Cassaigne&Currie&Schaeffer&Shallit:2014} and has the interesting
property that it avoids additive cubes, that is, three consecutive blocks
of the same length and same sum.
\end{example}

\subsection{Primitive morphic words}

An infinite word is
{\it primitive morphic} if it is
the image, under a coding, of some primitive morphism.

\begin{example}
An interesting example of a primitive morphic word appears in
\cite{Du&Mousavi&Rowland&Schaeffer&Shallit:2017}.  Consider the
morphism $g$ and coding $\rho$ defined as follows:
\begin{align*}
g(a) &= abcab  \quad & \rho(a) = 0 \\
g(b) &= cda \quad & \rho(b) = 0 \\
g(c) &= cdacd \quad & \rho(c) = 1 \\
g(d) &= abc \quad &  \rho(d) = 1
\end{align*}
and form the infinite word ${\bf R} = \rho(g^\omega(a))$.  By considering $g^2$
we see that $g$ is primitive and so $\bf R$ is primitive morphic.
The word $\bf R$, sometimes called the Rote-Fibonacci word,
is an aperiodic word that avoids the pattern $x x x^R$, where $x^R$ denotes the
reversal of $x$.
\end{example}

\subsection{Uniform primitive morphic words}

Finally, an infinite word is {\it pure uniform primitive morphic} if it
is the fixed point of a primitive morphism that is also uniform, and
{\it uniform primitive morphic} if it is the image, under a coding of
such a word.

\subsection{Recurrence and uniform recurrence}

An infinite word $\bf w$ is said to be {\it recurrent} if every finite
factor $x$ appearing in $\bf w$ appears infinitely often in $\bf w$.
If, in addition, for each factor $x$ there is a constant $c(x)$ such
that every two consecutive occurrences of $x$ in $\bf w$ are separated by
no more than $c(x)$ symbols, then we say that $\bf w$ is
{\it uniformly recurrent}.
Synonyms for uniformly recurrent in the literature include
``almost periodic'' \cite{Cobham:1972} and ``minimal'' \cite{Fogg:2002}.

A basic result connecting uniform recurrence with other properties
is the following theorem of Cobham
\cite[Theorem 5, p.\ 178]{Cobham:1972}:

\begin{theorem}
\leavevmode
\begin{itemize}
\item[(a)] If $\bf w$ is a primitive morphic word, then it is
uniformly recurrent.  
\item[(b)] If $\bf w$ is uniformly recurrent and uniform morphic,
then it is uniform primitive morphic.
\end{itemize}
\label{cobham}
\end{theorem}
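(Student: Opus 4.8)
I would prove the two parts separately: part (a) admits a short self-contained argument, while part (b) is the substantial one, the real content being that uniform recurrence forces primitivity of an auxiliary uniform morphism.

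For part (a), first observe that a coding cannot destroy uniform recurrence. If $\mathbf w=\tau(\mathbf u)$ with $\mathbf u$ uniformly recurrent and $y$ is a factor of $\mathbf w$, choose an occurrence of $y$ and let $z$ be the factor of $\mathbf u$ sitting in the same place, so that $\tau(z)=y$ and $|z|=|y|$; since every window of length $c(z)$ in $\mathbf u$ contains $z$, every window of length $c(z)$ in $\mathbf w=\tau(\mathbf u)$ contains $\tau(z)=y$, so $\mathbf w$ is uniformly recurrent. Hence it suffices to treat a fixed point $\mathbf u=h^\omega(a)$ of a primitive, prolongable morphism $h$. Fix a factor $x$ of $\mathbf u$. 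As $\mathbf u=\lim_j h^j(a)$, $x$ occurs in some $h^j(a)$, and by primitivity there is an $n$ with $a$ occurring in $h^n(b)$ for every letter $b$; hence $h^j(a)$, and therefore $x$, occurs in $h^{N}(b)=h^j(h^n(b))$ for $N=j+n$ and for \emph{every} $b\in\Sigma$. Writing $\mathbf u=h^N(\mathbf u)=h^N(\mathbf u[0])\,h^N(\mathbf u[1])\cdots$ displays $\mathbf u$ as a concatenation of nonempty blocks, each of length at most $R:=(\max_b|h(b)|)^N$ and each containing $x$; any window of length $2R$ in $\mathbf u$ therefore wholly contains one such block, hence contains $x$, so $c(x)\le 2R$ and $\mathbf u$ is uniformly recurrent. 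Nothing here is an obstacle; only the block-length bound and the coding reduction are needed.

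For part (b), write $\mathbf w=\tau(\mathbf u)$ with $\mathbf u=h^\omega(a)$, $h$ a $k$-uniform morphism and $\tau$ a coding; we must manufacture \emph{some} primitive uniform morphism, with some coding, that generates $\mathbf w$. The plan is a chain of normalizations of the triple $(h,\tau,a)$: discard every letter not occurring in $\mathbf u$; replace $h$ by a power $h^t$ (still uniform, of ratio $k^t$) so that a prolongable letter is available and the internal periodicity of the incidence matrix disappears; restrict to the sub-alphabet $\Gamma\subseteq\Sigma$ of letters occurring in $\mathbf u$ infinitely often — legitimate because $\mathbf u=h(\mathbf u)$ forces $h(\Gamma)\subseteq\Gamma^*$, so $h$ restricts to a $k$-uniform morphism of $\Gamma$, and $\mathbf u$ has a suffix lying in $\Gamma^\omega$; then pick a letter $c\in\Gamma$ with $h^t(c)\in c\,\Gamma^*$ as the new base point and adjust $\tau$ accordingly, checking that the fixed point so produced maps under the adjusted coding to $\mathbf w$ up to the passage between a word and one of its suffixes. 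After this normalization one wants the restricted morphism on $\Gamma$ to be primitive, and the hypothesis is precisely what delivers this.

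The main obstacle is that last implication: showing that the obstructions to primitivity are incompatible with uniform recurrence of $\mathbf w$, together with the bookkeeping needed to recover $\mathbf w$ itself rather than merely a word with the same set of factors. Two phenomena must be excluded (the third, internal periodicity of the incidence matrix, being absorbed into $h^t$). First, if a letter $b$ occurs in $\mathbf u$ only finitely often while some letter occurring infinitely often never produces $b$ under iteration of $h$, then $\mathbf w$ contains arbitrarily long factors over an alphabet omitting $\tau(b)$, forcing $\tau(b)$ to occur in $\mathbf w$ with unbounded gaps (or not at all), contradicting uniform recurrence and justifying the passage to $\Gamma$. Second, if $h$ restricted to $\Gamma$ is not irreducible — some proper nonempty $C\subsetneq\Gamma$ has $h(C)\subseteq C^*$ and is a sink for reachability under $h$ — then, exactly as in part (a), $h^N(C)$ fills up arbitrarily long factors of $\mathbf u$ over $C$ alone while the letters of $\Gamma\setminus C$ still occur infinitely often, producing in $\mathbf w$ arbitrarily long factors over $\tau(C)$ separated by letters of $\tau(\Gamma\setminus C)$ at unbounded distance, again impossible unless $\tau(\Gamma\setminus C)\subseteq\tau(C)$, in which case $C$ itself already works. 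Ruling these out leaves $h$ primitive on $\Gamma$, whence $\mathbf w$ is uniform primitive morphic. (Alternatively one may run the classical return-word argument: uniform recurrence gives, for each prefix $v$ of $\mathbf w$, finitely many return words and hence a derived sequence, automaticity of $\mathbf w$ makes the family of derived sequences finite, and one extracts from this a primitive — here uniform — morphism generating $\mathbf w$. In any event this is Theorem 5 of Cobham's 1972 paper.)
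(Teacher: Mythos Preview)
The paper does not prove this theorem; it simply attributes the result to Cobham \cite[Theorem~5, p.~178]{Cobham:1972} and moves on. So there is no proof in the paper to compare against, and you are attempting more than the paper itself does.

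Your argument for (a) is correct and standard: codings preserve uniform recurrence, and for a fixed point of a primitive morphism the block decomposition $\mathbf u = h^N(\mathbf u[0])\,h^N(\mathbf u[1])\cdots$ with every block containing the given factor yields the bound $c(x)\le 2R$.

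For (b), your sketch has the right architecture but leaves open the very gap you yourself flag as ``the bookkeeping needed to recover $\mathbf w$ itself rather than merely a word with the same set of factors.'' Restricting to the infinitely-occurring letters $\Gamma$, passing to a power $h^t$, and (when $h|_\Gamma$ is not irreducible) descending to a sink $C$ with $\tau(\Gamma\setminus C)\subseteq\tau(C)$ does eventually yield a primitive uniform morphism on some subalphabet. But the fixed point you obtain from it, once coded, is at best a word in the same minimal subshift as $\mathbf w$, not $\mathbf w$ itself. You assert the result is ``$\mathbf w$ up to the passage between a word and one of its suffixes,'' yet even that is unsubstantiated: the $\Gamma^\omega$-suffix of the original fixed point $\mathbf u$ need not coincide with any fixed point of $h|_\Gamma$ (it is not in general of the form $(h|_\Gamma)^\omega(c)$), and in any case passing from ``some suffix of $\mathbf w$ is uniform primitive morphic'' to ``$\mathbf w$ is uniform primitive morphic'' is not free---it is essentially an instance of the statement being proved. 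Closing this gap is where the real content of Cobham's argument lies: one must analyse the minimal $k$-DFAO computing $\mathbf w$ and show that, under uniform recurrence, the initial state can already be taken inside the bottom strongly connected component, so that the $k$-uniform morphism read off the automaton is primitive \emph{and} still generates $\mathbf w$ exactly. The return-word alternative you mention would also work, but it is itself a substantial argument; naming it does not fill the hole in the first approach.
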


We now mention some other useful results:

\begin{theorem}
Let $\bf u$ be a fixed point of a morphism $\sigma$ ``without useless
letters" (i.e., the alphabet on which $\sigma$ is defined is exactly
the set of letters occurring in $\bf u$). Furthermore suppose that for
each letter $a$ in $\bf u$ the length of the iterates
$|\sigma^k(a)|$ tends to infinity.
If the sequence $\bf u$ is uniformly recurrent, then $\sigma$ is primitive.
\end{theorem}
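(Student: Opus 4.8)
The plan is to translate primitivity into a property of the directed graph $G$ on vertex set $\Sigma$ with an edge $a \to b$ exactly when the letter $b$ occurs in $\sigma(a)$. A routine induction on $k$ shows that $b$ occurs in $\sigma^k(a)$ if and only if $G$ has a path of length $k$ from $a$ to $b$. Unwinding the definition of primitivity, $\sigma$ is primitive precisely when there is an integer $n \ge 1$ such that for \emph{every} ordered pair of letters $(a,b)$ the graph $G$ has a path of length exactly $n$ from $a$ to $b$. By the Perron--Frobenius dichotomy for nonnegative matrices --- or by the elementary padding argument given at the end of this proof --- this holds as soon as $G$ is strongly connected and has at least one loop. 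So there are two things to prove: (i) $G$ has a loop; and (ii) $G$ is strongly connected.

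Claim (i) is immediate. Since every letter of ${\bf u}$ is growing, $\sigma$ erases no letter, hence is non-erasing; so if $a_0$ denotes the first letter of ${\bf u}$ we get $a_0 = {\bf u}[1] = (\sigma({\bf u}))[1] = (\sigma(a_0))[1]$, meaning $\sigma(a_0)$ begins with $a_0$, so $a_0 \to a_0$ is a loop of $G$.

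For claim (ii), fix arbitrary letters $a, b \in \Sigma$. Because $\sigma$ has no useless letters, $a$ occurs in ${\bf u}$, say ${\bf u} = v\,a\,v'$; iterating the fixed-point identity ${\bf u} = \sigma({\bf u})$ gives ${\bf u} = \sigma^k({\bf u}) = \sigma^k(v)\,\sigma^k(a)\,\sigma^k(v')$, so $\sigma^k(a)$ is a factor of ${\bf u}$ for every $k \ge 0$. The letter $b$ also occurs in ${\bf u}$, so applying uniform recurrence to the one-letter factor $b$ produces a constant $L_b$ such that every factor of ${\bf u}$ of length at least $L_b$ contains $b$. Since $|\sigma^k(a)| \to \infty$, choose $k$ with $|\sigma^k(a)| \ge L_b$; then the factor $\sigma^k(a)$ contains $b$, which is exactly a path $a \to b$ of length $k$ in $G$. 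As $a$ and $b$ were arbitrary, $G$ is strongly connected.

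It remains to combine (i) and (ii). Let $D$ be such that $G$ has a path of length at most $D$ between any two of its vertices (for instance $D = |\Sigma|$), and put $n = 2D$. Given $a, b \in \Sigma$, take a shortest path, of length $\ell_1 \le D$, from $a$ to $a_0$ and a shortest path, of length $\ell_2 \le D$, from $a_0$ to $b$, and insert $n - \ell_1 - \ell_2 \ge 0$ traversals of the loop at $a_0$ between them; the result is a path of length exactly $n$ from $a$ to $b$, so $\sigma^n(a)$ contains $b$. Hence $\sigma$ is primitive. The argument is short, and the only point that I expect to need care --- the would-be obstacle --- is that the definition of primitivity demands a single exponent $n$ serving all pairs $(a,b)$ at once and making $\sigma^n(a)$ contain \emph{all} of $\Sigma$; strong connectivity alone does not give this, and one must absorb the differing path lengths using the loop at $a_0$, which relies once more on $\sigma$ being non-erasing (a consequence of the growth hypothesis).
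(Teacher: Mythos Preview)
Your proof is correct, and the heart of your claim~(ii) is exactly the paper's argument: since every letter occurs in ${\bf u}$, each $\sigma^k(a)$ is a factor of ${\bf u}$; uniform recurrence supplies a bound $L_b$ for each letter $b$; growth lets you exceed that bound. The paper, however, dispenses entirely with the graph-theoretic framing, the loop at $a_0$, and the padding step. It simply sets $\ell = \max_{b} L_b$, so that every factor of ${\bf u}$ of length $\ge \ell$ contains \emph{all} letters at once, and then takes $K = \max_a k_a$ where $k_a$ is chosen so that $|\sigma^{k}(a)| \ge \ell$ for all $k \ge k_a$ (possible because $\sigma$ is non-erasing, hence $|\sigma^k(a)|$ is non-decreasing). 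This twofold maximum over the finite alphabet already yields a single exponent $K$ with $\sigma^K(a) \supseteq \Sigma$ for every $a$, so primitivity follows directly. Your loop-and-padding device accomplishes the same uniformisation of the exponent, but it is a detour: it uses an extra structural fact (the loop at the initial letter) that the paper's double-max argument does not need.
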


\begin{proof}
By the hypothesis on the letters being all useful, 
it follows that
for every letter $a$ that $\sigma^k(a)$ is a factor of $\bf u$ for all 
$k \geq 0$.
Since $u$ is uniformly recurrent, for every letter $b$ there exists a length $\ell_b$
such that every factor of $u$ of length $\geq \ell_b$
contains at least one $b$.
Taking $\ell = \max_{b \in \Sigma} \ell_b$, every factor of 
$\bf u$ of length $\geq \ell$ contains at
least one copy of each letter.
On the other hand, for each $a$ we know that $|\sigma^k(a)|$ is
unbounded.  Hence there exists $k_a$ such that for all $k \geq k_a$
one has $|\sigma^{k_a}(a)| \geq \ell$. Taking $K = \max k_a$,
we see that for each
$k \geq K$ and for each letter $a$, the word
$\sigma^k(a)$ contains at least one copy of
each letter. In other words, $\sigma$ is primitive.
\end{proof}

\begin{corollary}
If a sequence is pure uniform morphic and uniformly recurrent,
then the sequence
is pure uniform primitive morphic.
\end{corollary}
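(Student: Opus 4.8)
The plan is to reduce to the theorem proved immediately above. Suppose ${\bf w}$ is pure uniform morphic, so ${\bf w} = h^\omega(a)$ for some morphism $h$ on an alphabet $\Sigma$ that is $k$-uniform for some $k \geq 2$ and prolongable on $a$. The only gap between this hypothesis and the hypotheses of the preceding theorem is the possible presence of useless letters: $h$ may be defined on an alphabet $\Sigma$ strictly larger than the set $\Gamma$ of letters actually occurring in ${\bf w}$. So the first step is to pass to a cleaned-up version of $h$ defined only on $\Gamma$.

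For the alphabet restriction, let $\Gamma$ be the set of letters occurring in ${\bf w}$. Since ${\bf w}$ begins with $a$ we have $a \in \Gamma$. Because ${\bf w}$ is a fixed point of $h$, for every $b \in \Gamma$ the word $h(b)$ occurs as a factor of $h({\bf w}) = {\bf w}$, hence $h(b) \in \Gamma^*$. Therefore $h$ restricts to a morphism $h' : \Gamma^* \rightarrow \Gamma^*$; this $h'$ is still $k$-uniform, and it is still prolongable on $a$: writing $h(a) = ax$, the letters of $x$ occur in ${\bf w}$, so $x \in \Gamma^*$, while $|ax| = k \geq 2$ forces $x \neq \epsilon$ and hence $h'^{\,i}(x) \neq \epsilon$ for all $i \geq 0$. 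Moreover ${\bf w} = h'^{\,\omega}(a)$, and by construction $h'$ has no useless letters.

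Next I would verify the growth hypothesis: for every $b \in \Gamma$ one has $|h'^{\,n}(b)| = k^n$, which tends to infinity because $k \geq 2$, so every letter is growing for $h'$. At this point all the hypotheses of the theorem preceding this corollary hold for $h'$ and ${\bf u} = {\bf w}$, namely ${\bf w}$ is a fixed point of $h'$, $h'$ has no useless letters, the iterates of every letter grow in length, and ${\bf w}$ is uniformly recurrent. Applying that theorem we conclude that $h'$ is primitive. Since $h'$ is also $k$-uniform, ${\bf w}$ is the fixed point of a primitive uniform morphism, that is, ${\bf w}$ is pure uniform primitive morphic.

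There is essentially no hard step here; the argument is almost entirely an invocation of the previous theorem. The only point requiring a moment's care is the alphabet restriction — in particular, checking that discarding the useless letters preserves uniformity and preserves prolongability on $a$ — and both of these follow immediately from the observation that $h(b)$ is a factor of the fixed point ${\bf w}$ for each $b \in \Gamma$.
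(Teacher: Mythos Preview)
Your proof is correct and follows exactly the approach of the paper, whose entire proof reads ``Clean out the useless letters.'' You have simply spelled out in detail what that means---restricting $h$ to the alphabet $\Gamma$ of letters actually occurring in ${\bf w}$, checking that the restriction remains $k$-uniform and prolongable, noting that every letter is growing since $k \geq 2$, and then invoking the preceding theorem.
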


\begin{proof}
Clean out the useless letters.
\end{proof}

\begin{corollary}
If a sequence is pure uniform morphic and primitive morphic, then
the sequence is pure uniform primitive morphic.
\label{allou}
\end{corollary}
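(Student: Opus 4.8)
The plan is to obtain Corollary~\ref{allou} from the preceding corollary — which asserts that a pure uniform morphic, uniformly recurrent sequence is pure uniform primitive morphic — by first supplying the only missing hypothesis, namely uniform recurrence. Since $\bf w$ is primitive morphic, Theorem~\ref{cobham}(a) immediately tells us that $\bf w$ is uniformly recurrent. As $\bf w$ is also pure uniform morphic by assumption, the preceding corollary applies verbatim and delivers the conclusion. So in outline the proof is just: invoke Theorem~\ref{cobham}(a) to get uniform recurrence, then invoke the previous corollary.

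To make the argument self-contained it is worth spelling out why ``cleaning out the useless letters'' really does keep us inside the class of uniform morphisms. Write ${\bf w} = h^\omega(a)$ with $h\colon \Sigma^*\to\Sigma^*$ a $k$-uniform morphism, $k\ge 2$, prolongable on $a$. Let $\Gamma\subseteq\Sigma$ be the set of letters that actually occur in $\bf w$. Since ${\bf w}=h({\bf w})$, for every $b\in\Gamma$ the word $h(b)$ is a factor of $\bf w$, so all of its letters again lie in $\Gamma$; thus $\Gamma$ is $h$-invariant and $h$ restricts to a morphism on $\Gamma^*$. This restriction is still $k$-uniform, still generates $\bf w$, and is still prolongable on $a$: indeed $a\in\Gamma$ (it is the first letter of $\bf w$), and $h(a)=ax$ with $|x|=k-1\ge 1$ forces $|h^i(x)|=k^i|x|\ge 1$ for all $i\ge 0$. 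Now on $\Gamma$ the morphism $h$ has no useless letters, and $|h^n(b)|=k^n\to\infty$ for every $b\in\Gamma$ because $k\ge 2$; combined with the uniform recurrence of $\bf w$ just established, the theorem on uniformly recurrent fixed points stated above shows that this restricted $h$ is primitive. Hence $\bf w$ is a fixed point of a uniform primitive morphism, i.e., pure uniform primitive morphic.

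The only point that needs any care is verifying that passing to the invariant sub-alphabet $\Gamma$ simultaneously preserves all three properties we need — uniformity, prolongability on $a$, and the growth of every letter — and this is precisely where the uniform setting is pleasant: uniformity is automatically inherited by any $h$-invariant sub-alphabet, and with $k\ge 2$ every letter is growing for free, so both of those come at no cost. Beyond this bookkeeping there is essentially no obstacle; all the genuine content is already packaged in Theorem~\ref{cobham}(a) and in the theorem on uniformly recurrent fixed points proved just above.
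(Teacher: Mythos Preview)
Your proof is correct and follows exactly the same route as the paper: invoke Theorem~\ref{cobham}(a) to obtain uniform recurrence from primitive morphic, then apply the preceding corollary. The paper's proof is just two sentences to this effect; your additional paragraph unpacking the ``clean out the useless letters'' step is a welcome expansion of what the paper leaves implicit, but it does not change the strategy.
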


\begin{proof}
The sequence is uniformly recurrent. Apply the previous corollary.
\end{proof}

We recall a result from 
Queff{\'e}lec \cite[Prop.~5.5, p.~130]{Queffelec:2010}:

\begin{theorem}
Let $h:\Sigma^* \rightarrow
\Sigma^*$ be a morphism, prolongable on $a$, and
suppose all letters of $\Sigma$ are growing.  Then
$h^\omega(a)$ is uniformly recurrent if and only if
$h$ is primitive.
\label{queffelec}
\end{theorem}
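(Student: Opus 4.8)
The statement to prove (Theorem~\ref{queffelec}, Queff\'elec's result) asserts that for a morphism $h$ prolongable on $a$ with \emph{all} letters growing, the fixed point $h^\omega(a)$ is uniformly recurrent if and only if $h$ is primitive. The plan is to handle the two directions separately. The ``if'' direction --- $h$ primitive implies $h^\omega(a)$ uniformly recurrent --- is the easier half, and in fact follows from part~(a) of Cobham's theorem (Theorem~\ref{cobham}) applied to the word $h^\omega(a)$ itself (taking the coding to be the identity), so I would simply cite that. For the ``only if'' direction, note that it is essentially the theorem stated and proved just above in the excerpt: if $h^\omega(a)$ is uniformly recurrent, then after cleaning out any letters not occurring in $h^\omega(a)$ we have a fixed point of a morphism ``without useless letters'' in which every letter is growing (hence $|\sigma^k(b)| \to \infty$ for each remaining letter $b$), so that theorem yields primitivity of the restricted morphism. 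The one subtlety here is the difference between $h$ being primitive and its restriction to the occurring letters being primitive.

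The key steps, in order, are: first, let $\Gamma \subseteq \Sigma$ be the set of letters that actually occur in ${\bf w} = h^\omega(a)$. Since ${\bf w}$ is a fixed point of $h$ and $a \in \Gamma$, for any $b \in \Gamma$ the word $h(b)$ is a factor of ${\bf w}$, hence every letter of $h(b)$ lies in $\Gamma$; thus $h(\Gamma^*) \subseteq \Gamma^*$ and we may restrict $h$ to a morphism $\sigma : \Gamma^* \to \Gamma^*$, still prolongable on $a$, with ${\bf w} = \sigma^\omega(a)$. Second, every letter of $\Gamma$ is growing for $\sigma$ (growth only depends on the letters reachable from $b$, all of which are in $\Gamma$), so $|\sigma^k(b)| \to \infty$ for each $b \in \Gamma$. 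Third, apply the theorem proved above in the excerpt to conclude $\sigma$ is primitive: there is $N$ with $\sigma^N(b)$ containing every letter of $\Gamma$ for all $b \in \Gamma$. Fourth, bridge back to $h$: for a letter $c \in \Sigma \setminus \Gamma$, since $c$ is growing, $|h^m(c)| \to \infty$, and one must check $h^m(c)$ eventually contains all letters of $\Sigma$ --- but this need not hold for letters outside $\Gamma$, which is precisely why the statement should be read as a statement about the morphism generating ${\bf w}$; I would remark that without loss of generality one assumes $\Sigma = \Gamma$, i.e.\ that $h$ has no useless letters, exactly as Queff\'elec does, so that $\sigma = h$ and we are done.

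The main obstacle is thus not really mathematical but a matter of matching conventions: the clean statement requires the implicit normalization that the alphabet equals the set of occurring letters, and I would make this explicit at the start of the proof (``we may assume $h$ has no useless letters, since otherwise we replace $\Sigma$ by the set of letters occurring in $h^\omega(a)$''). Once that is said, the ``only if'' direction is immediate from the theorem already established, and the ``if'' direction is immediate from Theorem~\ref{cobham}(a). If one instead wants a self-contained ``if'' direction without invoking Cobham: given $h$ primitive with constant $n$, for any factor $x$ of ${\bf w}$ there is some $k$ and a letter $b$ with $x$ a factor of $\sigma^k(b)$; since $\sigma^n(c)$ contains $b$ for every $c$, the word $\sigma^{n+k}(c)$ contains $x$ for every $c$, and since ${\bf w}$ is an infinite concatenation of images $\sigma^{n+k}(c)$ with bounded lengths (by growth of a prolongable morphism all iterates $\sigma^{n+k}(c)$ have length between $1$ and $\max_c |\sigma^{n+k}(c)|$), consecutive occurrences of $x$ are separated by at most $2\max_c|\sigma^{n+k}(c)|$ symbols. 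I would present the citation-based argument as the primary proof and mention this direct argument only if space permits.
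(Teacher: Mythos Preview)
The paper does not actually prove Theorem~\ref{queffelec}; it is quoted from Queff\'elec's book \cite[Prop.~5.5, p.~130]{Queffelec:2010} without proof, so there is no in-paper argument to compare your proposal against.

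Your argument itself is sound. The ``if'' direction is indeed immediate from Theorem~\ref{cobham}(a), and the ``only if'' direction, once one assumes $\Sigma$ equals the set of letters occurring in $h^\omega(a)$, is exactly the unnamed theorem proved just before Theorem~\ref{queffelec} in the paper. You are also right to flag the useless-letters issue: as literally stated (with $\Sigma$ possibly containing letters not occurring in $h^\omega(a)$), the ``only if'' direction fails --- e.g., $\Sigma=\{a,b,c\}$ with $h(a)=ab$, $h(b)=ba$, $h(c)=cc$ has all letters growing and $h^\omega(a)$ uniformly recurrent (Thue--Morse), yet $h$ is not primitive. So Queff\'elec's implicit normalization that every letter of $\Sigma$ occurs in the fixed point is essential, and your decision to state it explicitly at the outset is the correct fix rather than a mere stylistic choice.
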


Finally, we also mention a recent important result of
Durand \cite[Thm.~3, p.~124]{Durand:2013}:

\begin{theorem}
Let $\bf u$ be a morphic sequence that is uniformly recurrent.
Then $\bf u$ is primitive morphic.  
\label{durand}
\end{theorem}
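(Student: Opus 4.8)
The plan is to transform the given morphic presentation of $\mathbf{u}$ into a primitive one, using uniform recurrence as the only real leverage. Fix a presentation $\mathbf{u} = \tau(\sigma^\omega(a))$. By the standard normalization of morphic sequences (the ``desubstitution'' arguments of \cite{Cobham:1972}, or Chapter~7 of Allouche and Shallit's book \emph{Automatic Sequences}) I may assume $\sigma$ is non-erasing and has no useless letters, i.e.\ every letter of its alphabet occurs in $\mathbf{v} := \sigma^\omega(a)$. If $\mathbf{u}$ is eventually periodic then, being uniformly recurrent, it is purely periodic, say $\mathbf{u} = (v_0 v_1 \cdots v_{m-1})^\omega$; such a word is primitive morphic, being the image under the coding $i \mapsto v_i$ of the fixed point of the $m$-uniform primitive morphism on $\{0,1,\ldots,m-1\}$ that sends every letter to $0\,1\,2\cdots(m-1)$. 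So assume henceforth that $\mathbf{u}$ --- hence also $\mathbf{v}$ --- is not eventually periodic.

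The first step is to eliminate the non-growing letters of $\sigma$. Let $B$ be its set of bounded letters; a standard fact about iteration of bounded letters yields a constant $M$ such that every factor of $\mathbf{v}$ lying in $B^*$ has length at most $M$, so growing letters occur in $\mathbf{v}$ with gaps at most $M$. Passing to a high power $\sigma^N$ (same fixed point), one recodes $\mathbf{v}$ by cutting it into \emph{blocks}, each block consisting of one growing letter together with an adjacent bounded run, and --- this is the technical part, carried out as in the reduction of a non-erasing morphism to a ``growing'' one in the references above --- one arranges that $\sigma^N$ sends a block to a concatenation of whole blocks. This produces a non-erasing morphism $\sigma'$ on the finite alphabet of blocks, prolongable on the block $a'$ that begins $\mathbf{v}$, together with a coding $\pi$ (reading off the letters inside a block) such that $\pi(\sigma'^\omega(a')) = \mathbf{v}$ and \emph{every} letter of $\sigma'$ is growing. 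Composing codings, $\mathbf{u} = (\tau\circ\pi)(\sigma'^\omega(a'))$, and after again deleting useless letters $\sigma'$ is a non-erasing, all-growing, prolongable morphism with no useless letters.

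If the pure morphic word $\mathbf{w} := \sigma'^\omega(a')$ were itself uniformly recurrent, we would be finished: by Theorem~\ref{queffelec} (Queff\'elec), or equivalently by the theorem on morphisms without useless letters proved above, $\sigma'$ would be primitive, and hence $\mathbf{u} = (\tau\circ\pi)(\sigma'^\omega(a'))$ would be primitive morphic. The difficulty --- and the heart of the matter --- is that uniform recurrence need not descend through the coding $\tau\circ\pi$: a coding can carry a word that is not uniformly recurrent onto one that is, and the block recoding is essentially a relettering, so $\mathbf{w}$ is uniformly recurrent exactly when $\mathbf{v}$ is, which we have \emph{not} established. Bridging this gap is precisely where real input is needed. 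One route is to use recognizability (Moss\'e) together with a further pruning of $\sigma'$ down to the sub-alphabet of letters that actually recur in $\mathbf{w}$ with bounded gaps, and to show that on this part the coding becomes locally invertible, so that uniform recurrence of $\mathbf{u}$ does force uniform recurrence of the relevant pure morphic word. An alternative route, closer to Durand's own argument, bypasses Queff\'elec entirely: show that a uniformly recurrent morphic word has only finitely many derived sequences (each derived sequence, built from the return words to a prefix, is again morphic, and the point is to bound their alphabet sizes and morphism complexities uniformly), and then invoke Durand's characterization, via return words, of primitive substitutive sequences as exactly those having finitely many derived sequences.

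In short, the periodic case and the removal of bounded letters are routine; the single real obstacle is controlling the interplay between the coding and uniform recurrence --- equivalently, producing a presentation of $\mathbf{u}$ whose underlying pure morphic word is itself uniformly recurrent, or else establishing finiteness of the family of derived sequences. Once either of these is in hand, the results already assembled in this section close the argument.
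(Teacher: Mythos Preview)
The paper does not prove this theorem at all: it is quoted as Durand's result \cite[Thm.~3, p.~124]{Durand:2013}, with a brief remark that Durand's actual statement (uniformly recurrent and morphic $\Longleftrightarrow$ primitive substitutive) implies the formulation given. So there is no ``paper's own proof'' to compare against.

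As for your proposal on its own terms: you have correctly isolated the genuine obstacle and, to your credit, you say so explicitly. The reductions you carry out --- the periodic case, the elimination of bounded letters via block recoding --- are standard and fine. But what remains is not a detail: you yourself observe that uniform recurrence of $\mathbf{u}$ need not lift back through the coding to uniform recurrence of the underlying pure morphic word, so Theorem~\ref{queffelec} cannot be applied. Your two suggested routes are the right ones, and the second is exactly Durand's: show that a uniformly recurrent morphic word has only finitely many derived sequences (with respect to prefixes), then invoke his earlier characterization of primitive substitutive sequences via finiteness of derived sequences. That finiteness argument is the substance of Durand's paper and is nontrivial; sketching it as ``bound their alphabet sizes and morphism complexities uniformly'' is accurate as a slogan but is not a proof. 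The Moss\'e/recognizability route you mention first is less clear-cut and, as far as I know, does not lead to a clean argument without essentially reproducing the derived-sequence machinery.

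In summary: what you have written is an honest and well-informed outline that stops precisely at the hard step, names it correctly, and points to the right literature --- but it is not a proof, and the paper itself makes no attempt at one either.
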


\begin{remark}
More precisely, Durand proves in \cite{Durand:2013} that a sequence is uniformly recurrent 
and morphic if and only if it is ``primitive substitutive'',
but his theorem also implies the statement 
above (F. Durand, private communication, June 2017).
\end{remark}

\section{The classification}

From the preceding section, a word can be classified in ten different
ways:

\begin{itemize}
\item[P1:] pure morphic
\item[P2:] morphic
\item[P3:] pure uniform morphic
\item[P4:] uniform morphic
\item[P5:] pure primitive morphic
\item[P6:] primitive morphic
\item[P7:] pure uniform primitive morphic
\item[P8:] uniform primitive morphic
\item[P9:] uniformly recurrent
\item[P10:] recurrent
\end{itemize}

However, these ten properties are clearly not independent.  We have the
following trivial implications:

\begin{itemize}
\item P1 $\implies$ P2
\item P3 $\implies$ P1, P2, P4
\item P4 $\implies$ P2
\item P5 $\implies$ P1, P2, P6
\item P6 $\implies$ P2
\item P7 $\implies$ P1, P2, P3, P4, P5, P6, P8
\item P8 $\implies$ P2, P4, P6
\item P9 $\implies$ P10
\end{itemize}

Theorem~\ref{cobham} (a) tells us that P6 $\implies$ P9, and
Theorem~\ref{cobham} (b) tells us (P6 and P4) $\implies$ P8.
Corollary~\ref{allou} tells us that (P3 and P9) $\implies$ P7.
Theorem~\ref{durand} tells us that (P2 and P9) $\implies$ P6.
All these restrictions lower the total
number of possibilities from 1024 to 20:

\begin{itemize}
\item[(a)] Neither morphic nor recurrent.

\item[(b)] Recurrent, but neither morphic nor uniformly recurrent.

\item[(c)] Uniformly recurrent, but not morphic.

\item[(d)] Morphic; but neither pure morphic, uniform morphic,
primitive morphic, nor recurrent.

\item[(e)] Morphic and recurrent; but neither pure morphic, uniform morphic,
primitive morphic, nor uniformly recurrent.

\item[(f)] Primitive morphic; but neither pure morphic nor uniform morphic.

\item[(g)] Uniform morphic; but neither pure morphic, primitive morphic,
nor recurrent.

\item[(h)] Uniform morphic and recurrent; but neither pure morphic nor
primitive morphic.

\item[(i)] Uniform primitive morphic; but not pure morphic.

\item[(j)] Pure morphic; but neither uniform morphic, primitive morphic,
nor recurrent.

\item[(k)] Pure morphic and recurrent; but neither uniform morphic,
primitive morphic, nor uniformly recurrent.

\item[(l)] Pure morphic and primitive morphic; but neither uniform morphic
nor pure primitive morphic.

\item[(m)] Pure primitive morphic; but not uniform morphic.

\item[(n)] Pure morphic and uniform morphic, but neither pure uniform
morphic, primitive morphic, nor recurrent.

\item[(o)] Pure morphic and uniform morphic and recurrent,
but neither pure uniform morphic nor primitive morphic.

\item[(p)] Pure morphic and uniform primitive morphic;
but neither pure uniform morphic nor pure primitive morphic.

\item[(q)] Pure primitive morphic and uniform primitive morphic; but
not pure uniform morphic.

\item[(r)] Pure uniform morphic; but neither primitive morphic nor recurrent.

\item[(s)] Pure uniform morphic and recurrent; but not primitive morphic.

\item[(t)] Pure uniform primitive morphic.

\end{itemize}

In this note we give examples of all 20 possibilities.  Of course,
examples of some of these cases are very well-known; it is our point to
collect these examples in one place and to illustrate each of the 
20 classes.  Inexplicably, the authors of \cite{Allouche&Shallit:2003} 
failed to do this explicitly.

Before we get to the examples, we recall several more useful results.

The first few involve frequency of letters.
Define the frequency of a symbol $a$ in an infinite word $\bf w$ to 
be the quantity $\lim_{n \rightarrow \infty} {{{{\bf w}[0..n-1]}_a} \over{n}}$,
if it exists.  

\begin{theorem} 
Suppose the frequency $\alpha$ of the letter $a$ in the word $\bf w$ exists.
Then
\begin{itemize}
\item[(a)] If $\bf w$ is morphic then $\alpha$ is algebraic.
\item[(b)] If $\bf w$ is uniform morphic then $\alpha$ is rational.
\end{itemize}
\label{freq}
\end{theorem}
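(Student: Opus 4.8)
The plan is to pass to the incidence matrix of the underlying morphism. Write $\mathbf{w} = \tau(h^\omega(b))$, where $h\colon\Sigma^*\to\Sigma^*$ is prolongable on $b$ and $\tau$ is a coding, and, after deleting useless letters, assume that every letter of $\Sigma$ occurs in $h^\omega(b)$. Let $M$ be the $\Sigma\times\Sigma$ nonnegative integer matrix with $M_{cd}=|h(d)|_c$, the number of occurrences of $c$ in $h(d)$, and let $e_b$ be the standard basis vector at $b$. Then the $c$-th coordinate of $M^n e_b$ is $|h^n(b)|_c$, and $L_n:=\mathbf{1}^\top M^n e_b=|h^n(b)|$. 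Since $h$ is prolongable on $b$ we have $L_n\to\infty$, so if the frequency $\alpha$ of $a$ in $\mathbf{w}$ exists it equals the limit of $\frac1{L_n}\sum_{c:\,\tau(c)=a}(M^n e_b)_c$ along the subsequence of prefix lengths $L_n$. Everything thus reduces to the asymptotics of the vectors $M^n e_b/L_n$.

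When $h$ is primitive this is immediate from Perron--Frobenius theory: $M$ is a primitive nonnegative integer matrix, so it has a simple, strictly dominant eigenvalue $\rho>0$, which, being a root of the monic integer polynomial $\det(tI-M)$, is an algebraic integer; moreover $\rho^{-n}M^n$ tends to a rank-one matrix, so $M^n e_b/L_n\to\mathbf{r}/(\mathbf{1}^\top\mathbf{r})$ for the right Perron eigenvector $\mathbf{r}$. Solving $(M-\rho I)\mathbf{r}=0$ shows $\mathbf{r}$ may be taken with entries in $\mathbb{Q}(\rho)$, whence $\alpha=\sum_{\tau(c)=a}\mathbf{r}_c/(\mathbf{1}^\top\mathbf{r})$ is algebraic; this proves (a) here. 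If in addition $h$ is $k$-uniform, every column of $M$ sums to $k$, so $\mathbf{1}^\top M=k\,\mathbf{1}^\top$ and $\rho=k\in\mathbb{Q}$, giving $\mathbb{Q}(\rho)=\mathbb{Q}$ and hence (b) in this case.

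For the general statement the morphism need not be primitive --- there may be non-growing letters and several recurrent classes, so $M^n e_b$ need not be asymptotically rank-one --- and the plan is to use the structure of powers of an integer matrix. Each coordinate $(M^n e_b)_c$, and likewise $L_n$, is a $C$-finite sequence $\sum_j Q_j(n)\lambda_j^n$, where the $\lambda_j$ are eigenvalues of $M$ (algebraic integers) and the $Q_j$ are polynomials with coefficients in the splitting field $K$ of $\det(tI-M)$ over $\mathbb{Q}$. Let $\rho$ be the largest $|\lambda_j|$ occurring here and $p$ the largest degree with which an eigenvalue of modulus $\rho$ occurs; then $(M^n e_b)_c/(n^p\rho^n)$ and $L_n/(n^p\rho^n)$ equal, up to $o(1)$, finite exponential sums $C_c(n)=\sum_{|\lambda_j|=\rho}(\lambda_j/\rho)^n q_{j,c}$ and $\bar C(n)=\sum_c C_c(n)$ with $q_{j,c}\in K$, and $\bar C\not\equiv 0$ by the choice of $(\rho,p)$. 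The hypothesis that the frequency exists forces $\bigl(\sum_{\tau(c)=a}C_c(n)\bigr)/\bar C(n)\to\alpha$; since a nonzero finite exponential sum with pairwise distinct unit-modulus frequencies cannot tend to $0$ (the Bohr mean of its square is the sum of the squared moduli of its coefficients), this forces the identity $\sum_{\tau(c)=a}C_c(n)\equiv\alpha\,\bar C(n)$ of sequences, and evaluating at any integer $r$ with $\bar C(r)\ne 0$ gives $\alpha=\bigl(\sum_{\tau(c)=a}C_c(r)\bigr)/\bar C(r)\in K$, which is algebraic. In the $k$-uniform case $\rho=k$, and the eigenvalues of $M$ of modulus $k$ are $k$ times roots of unity (Perron--Frobenius for the recurrent classes, whose sub-matrices become stochastic upon division by $k$), so the data $\{(\lambda_j,q_{j,c})\}$ defining $C_c$ is stable under $\mathrm{Gal}(K/\mathbb{Q})$; hence $C_c(r),\bar C(r)\in\mathbb{Q}$ and $\alpha$ is rational.

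I expect the main obstacle to be making this reduction to the $C$-finite picture fully rigorous in the reducible case: correctly bookkeeping the contributions of non-growing letters and of chains of recurrent classes, justifying the choice of $\rho$ and $p$ together with the nonvanishing $\bar C\not\equiv 0$ (equivalently, that $L_n$ genuinely has order $n^p\rho^n$ along some residue class), and, for rationality, the semisimplicity of the peripheral spectrum in the uniform case. For part (b) one can instead simply invoke Cobham's theorem that the natural density of a $k$-automatic set, when it exists, is rational.
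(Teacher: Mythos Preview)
The paper does not supply its own proof of this theorem; it simply refers the reader to \cite[Theorem~8.4.5, p.~268]{Allouche&Shallit:2003}. Your proposal---pass to the incidence matrix $M$, expand $M^n e_b$ via its Jordan form as a $C$-finite vector sequence, and extract $\alpha$ as a ratio of quantities lying in the splitting field of $\det(tI-M)$---is exactly the standard argument given in that reference, so you have reconstructed the intended proof. One simplification: since $M$ is a nonnegative integer matrix, the eigenvalues on its spectral circle are $\rho$ times roots of unity (apply Frobenius to each irreducible diagonal block of the Frobenius normal form), so your ``exponential sums'' $C_c(n)$ and $\bar C(n)$ are in fact \emph{periodic} in $n$. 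This lets you dispense with the Bohr-mean step---which as written has a small gap, since you have not justified that $\bar C(n)$ stays bounded away from $0$, and without that the passage from $(C_a(n)+o(1))/(\bar C(n)+o(1))\to\alpha$ to $C_a(n)-\alpha\bar C(n)\to 0$ fails---and instead argue directly on a residue class $r$ with $\bar C(r)\neq 0$, obtaining $\alpha=C_a(r)/\bar C(r)$ at once. The same observation makes the Galois-stability argument for part~(b) immediate, since conjugation permutes the roots of unity and hence the terms of $C_c$. With this in hand, the obstacles you flag in your final paragraph are routine; your alternative of quoting Cobham's density theorem for part~(b) is also perfectly acceptable and is in the same spirit as the paper's own citation.
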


For a proof, see \cite[Theorem 8.4.5, p.~268]{Allouche&Shallit:2003}.

\begin{proposition}
Let $\bf w$ be a morphic sequence and let
$a$ be a letter occurring infinitely
often in $\bf w$.  Then the number of occurrences of
$a$ in a prefix of length $n$ of $\bf w$ is
$\Omega(\log n)$.  
\label{positions}
\end{proposition}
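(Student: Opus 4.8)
The plan is to set up the morphism with its standard normal form and then exploit the fact that the length of prefixes grows at least exponentially along iterates of the morphism, while the number of occurrences of $a$ inside an iterate grows at least linearly in the number of iterations. First I would write $\mathbf{w} = \tau(h^\omega(b))$ where $h$ is prolongable on $b$ and $\tau$ is a coding; by standard facts about morphic sequences we may assume all letters of the underlying alphabet $\Sigma$ are \emph{accessible} from $b$ (reachable by iterating $h$), and we may restrict attention to those letters that actually occur in $h^\omega(b)$. Since $a$ occurs infinitely often in $\mathbf{w}$, there is some letter $c \in \Sigma$ with $\tau(c) = a$ that occurs infinitely often in $h^\omega(b)$.

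The key step is a growth estimate. Let $\mathbf{u} = h^\omega(b)$ and for a letter $c$ let $N_c(x)$ denote the number of occurrences in a word $x$ of letters mapping to $a$ under $\tau$ (equivalently, occurrences of $a$ in $\tau(x)$). I would argue that because $c$ occurs infinitely often in $\mathbf{u}$ and $\mathbf{u}$ is a fixed point of $h$, the letter $c$ must occur in $h^m(d)$ for \emph{some} letter $d$ and some $m$; iterating, one shows that for a suitable letter $d_0$ occurring in $\mathbf{u}$ and a suitable constant $p$, each application of $h^p$ to (a prefix containing) $d_0$ strictly increases the count $N_c$, hence $N_c(h^{pk}(b)) \ge k$ for all $k \ge 0$ (after possibly discarding finitely many initial terms and replacing $b$ by a prefix of $\mathbf{u}$ that already contains $d_0$). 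At the same time, $|h^{pk}(b)| \le C \cdot t^{k}$ for constants $C > 0$ and $t = \max_{e \in \Sigma}|h^p(e)| \ge 2$ (the growth being at least geometric since $b$ is a growing letter; in the worst case $t$ could be as small as $2$, which is exactly why we only get a $\log$ bound). Combining, the prefix of $\mathbf{u}$ of length $n = |h^{pk}(b)|$ contains at least $k \ge \log_t(n/C)$ occurrences of $c$, and applying $\tau$ gives at least that many occurrences of $a$ in the corresponding prefix of $\mathbf{w}$. For a general length $n$, locate the largest $k$ with $|h^{pk}(b)| \le n$; monotonicity of prefixes and of the count $N_c$ then yields $N_a(\mathbf{w}[0..n-1]) \ge k = \Omega(\log n)$.

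The main obstacle I anticipate is the bookkeeping needed to guarantee that the count $N_c$ genuinely increases with each block of $p$ iterations — a priori an infinitely-occurring letter $c$ need not appear in $h(b)$ itself, and one must pass through the directed graph on $\Sigma$ induced by ``$e$ occurs in $h(f)$'' to find a cycle through a letter from which $c$ is reachable. The clean way to handle this is: since $c$ occurs infinitely often, for each $j$ there is an occurrence of $c$ in $\mathbf{u}$ at a position $\ge j$; by the prefix-iteration structure $\mathbf{u} = b\,x\,h(x)\,h^2(x)\cdots$, such an occurrence lies inside some $h^i(x)$, so $c$ is reachable from a letter of $x$ under $h^i$ for infinitely many $i$, forcing a letter $d_0$ of $x$ (or of $b$) that lies on a cycle in the reachability graph and from which $c$ is reachable; then $h^p(d_0)$ contains $d_0$ and, for $p$ large enough, also contains $c$, so $N_c(h^{pk}(d_0)) \ge k$. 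Everything else is the routine geometric-growth comparison sketched above; one should also note explicitly the degenerate case where $a$ has positive frequency, in which case the conclusion is immediate and much stronger, but the stated bound must cover the slow-growth case too.
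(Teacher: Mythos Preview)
Your plan is sound and can be completed to a correct proof; the cycle-in-the-reachability-graph argument you sketch does yield at least one occurrence of $c$ in $h^{s+jq+r}(x)$ for every $j\ge 0$, hence $\Omega(k)$ occurrences in $h^k(b)$, and the crude upper bound $|h^k(b)|\le M^k$ with $M=\max_e|h(e)|$ turns that into $\Omega(\log n)$. Two small points to tidy up: you should invoke the standard reduction to a \emph{non-erasing} morphism (otherwise $|h^p(e)|$ need not bound anything useful), and you can simplify the search for $d_0$ by noting that $b$ itself is on a cycle (the self-loop from prolongability) and reaches $c$, so $d_0=b$ already works.

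The paper, however, does not argue this way at all: its proof is a one-line appeal to the \emph{matrix representation} of the morphism, citing \cite[Cor.~8.2.4]{Allouche&Shallit:2003}. In that framework one looks at the incidence matrix $M$ of $h$; the number of occurrences of $c$ in $h^k(b)$ is the $(c,b)$-entry of $M^k$, which by linear algebra over $\mathbb{C}$ (Jordan form) is a finite sum of terms $P(k)\lambda^k$. For non-negative integer matrices this entry is either eventually $0$ or grows at least like a positive constant times $k^0\cdot 1^k$, and comparing with $|h^k(b)|$ (another entry-sum of $M^k$, bounded above by a polynomial times the spectral radius to the $k$) gives the $\Omega(\log n)$ bound and in fact much sharper asymptotics. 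So the paper's route buys precise growth rates essentially for free once the matrix machinery is in place, whereas your combinatorial route is more elementary and self-contained but yields only the bare $\Omega(\log n)$ statement. Both are valid; yours would be preferable in a context where one does not want to import the linear-algebraic theory of non-negative matrices.
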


\begin{proof}
This follows from the matrix representation of the morphism.
See, for example, \cite[Cor.~8.2.4, p.~249]{Allouche&Shallit:2003}.
\end{proof}

The last results involve subword complexity (or factor complexity), the
number of distinct factors of length $n$.    We recall the following
theorem of Pansiot \cite{Pansiot:1984a}:

\begin{theorem}
If $\bf w$ is a pure morphic word, then the subword complexity of 
$\bf w$ is in $\Theta(1)$,
$\Theta(n)$, $\Theta (n\log\log n)$, $\Theta(n \log n)$, or $\Theta(n^2)$.
\label{subword}
\end{theorem}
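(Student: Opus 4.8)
The plan is to follow Pansiot's original argument, which normalises the generating morphism and then splits into cases governed by the growth rates of the letters. Write $\mathbf w = h^\omega(a)$ with $h$ prolongable on $a$. First I would discard every letter not reachable from $a$: such letters never occur in $\mathbf w$, and since the image of a reachable letter uses only reachable letters, the restriction of $h$ is a well-defined morphism generating the same word. Next, partition the remaining alphabet $\Sigma$ into the set $\Gamma$ of growing letters and the set $B$ of bounded (non-growing) letters, where $B$ absorbs any letter eventually erased by a power of $h$. For each $b \in \Gamma$, the Perron--Frobenius and Jordan structure of the incidence matrix of $h$ --- the same matrix underlying Proposition~\ref{positions} --- gives a growth order $|h^n(b)| = \Theta(n^{d_b}\,\theta_b^{\,n})$ with $\theta_b \ge 1$ real and $d_b \ge 0$ an integer. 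These data, together with the lengths of the maximal blocks of bounded letters occurring in $\mathbf w$, decide which of the five classes arises.

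The workhorse throughout is \emph{desubstitution}: any sufficiently long factor $u$ of $\mathbf w$ can be written $u = p\,h(u')\,s$, where $u'$ is a factor of $\mathbf w$, $p$ is a proper suffix of some $h(c)$, and $s$ a proper prefix of some $h(c')$; and, up to the bounded ambiguity coming from the choices of $p$ and $s$ (and, when $B \ne \varnothing$, from the erased letters one may have to allow inside $u'$), this decomposition is essentially forced by the position of an occurrence. Iterating it bounds the set of length-$n$ factors of $\mathbf w$ in terms of the factor sets at the smaller scales $n/\theta_b$, and conversely lets one lift an explicit finite family of factors upward by applying powers of $h$. To convert this into matching upper and lower bounds I would track the first-difference function $p_{\mathbf w}(n+1) - p_{\mathbf w}(n)$ and count left- and right-special (bispecial) factors through the desubstitution.

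When $B = \varnothing$ (an everywhere-growing morphism): if $\mathbf w$ is eventually periodic then $p_{\mathbf w}(n) = \Theta(1)$; if $h$ is \emph{quasi-uniform} --- all $\theta_b$ equal and all $d_b = 0$ --- the desubstitution recursion contracts uniformly by a single ratio, at most boundedly many new special factors appear at each scale, whence $p_{\mathbf w}(n) = O(n)$, which together with the Morse--Hedlund bound $p_{\mathbf w}(n) \ge n+1$ in the aperiodic case gives $\Theta(n)$; and if the growth orders diverge, the complexity acquires an extra logarithmic factor whose magnitude depends on the precise nature of the divergence, landing in either $\Theta(n\log\log n)$ or $\Theta(n\log n)$, with the matching lower bound in each subcase furnished by an explicit sequence of factors built from the images $h^k(c)$ of the relevant letters. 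When $B \ne \varnothing$, what matters is the behaviour of the maximal runs of bounded letters between consecutive growing letters: these are ultimately periodic words whose lengths are either bounded --- in which case one contracts them away and reduces to the everywhere-growing analysis, landing in one of the first four classes --- or grow like $\Theta(\log(\text{position}))$. In the latter situation a window of length $n$ can meet such a run in $\Theta(n)$ alignments while the run it meets can have any length up to $\Theta(n)$, producing $\Theta(n)$ independent choices of each kind and hence $p_{\mathbf w}(n) = \Theta(n^2)$; the matching upper bound $O(n^2)$ holds because at most one long bounded run fits inside a window.

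The routine parts are the normalisation, the Morse--Hedlund lower bound, and the $\Theta(1)$, $\Theta(n)$ and $\Theta(n^2)$ cases. The main obstacle is the fine multi-scale bookkeeping separating $\Theta(n\log\log n)$ from $\Theta(n\log n)$: one must follow precisely how a bispecial factor at one scale desubstitutes to one at the next, and count exactly how many scales remain ``active'' given the growth vector $(\theta_b, d_b)_{b \in \Gamma}$ (and, when $B \ne \varnothing$, the run-length data), so that the special factors accumulate to the stated order and no faster. A secondary nuisance is keeping the desubstitution bookkeeping uniform in the presence of bounded and erased letters, where the clean ``unique occurrence'' picture degenerates and must be repaired by hand.
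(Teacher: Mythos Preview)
The paper does not prove this theorem at all: it is stated as a quotation of Pansiot's result \cite{Pansiot:1984a} and left without proof, so there is nothing in the paper to compare your argument against line by line. What you have written is a faithful high-level outline of Pansiot's own proof --- the normalisation, the growing/bounded partition, the quasi-uniform / polynomially-diverging / exponentially-diverging trichotomy for everywhere-growing morphisms, and the bounded-run dichotomy leading to $\Theta(n^2)$ --- so in that sense your approach and the result the paper invokes are the same thing.

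Two small points if you intend to flesh this into a full proof. First, your description of the $\Theta(n\log\log n)$ versus $\Theta(n\log n)$ split is left deliberately vague; for a complete argument you must state the criterion explicitly (same dominant eigenvalue $\theta$ for all growing letters but differing polynomial degrees $d_b$ gives $\Theta(n\log\log n)$; distinct $\theta_b$'s give $\Theta(n\log n)$) and actually carry out the bispecial-factor count across scales rather than gesture at it. Second, the claim that unbounded runs of bounded letters ``grow like $\Theta(\log(\text{position}))$'' is not literally true in all cases and is in any event not the invariant Pansiot uses: what matters is simply whether the set of bounded-letter factors of $\mathbf w$ is finite or infinite, and in the infinite case the $\Theta(n^2)$ bounds come from counting how many distinct bounded blocks of each length occur, not from a growth-rate statement about individual runs. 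Neither point is fatal to the sketch, but both would need attention before this became a proof rather than a plan.
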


We also recall the following theorem:

\begin{theorem}
If $\bf w$ is a uniform morphic word or a primitive morphic word that
is not ultimately periodic,
then the subword complexity of $\bf w$ is $\Theta(n)$.
\label{uniprim}
\end{theorem}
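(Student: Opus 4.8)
The plan is to split the claim into its two cases and treat each using known structure theorems about subword complexity. First recall the general lower bound: a word that is not ultimately periodic has subword complexity at least $n+1$ for every $n$ (Morse--Hedlund), so in both cases it suffices to establish the matching upper bound, i.e.\ that the subword complexity is $O(n)$.

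For the uniform morphic case, I would invoke Cobham's theorem: a word is uniform morphic if and only if it is automatic (this is stated in the excerpt). For a $k$-automatic word $\bf w$, the factor of length $\ell$ starting at position $i$ is determined by the state the automaton is in after reading the first so-many base-$k$ digits of $i$ together with the next few digits; more precisely, the $k$-kernel of $\bf w$ (the set of subsequences $({\bf w}[k^e n + r])_{n \ge 0}$ for $e \ge 0$, $0 \le r < k^e$) is finite, and from this one gets that the number of distinct factors of length $n$ is $O(n)$. I would cite the standard reference for this — e.g.\ \cite[Theorem 10.3.1]{Allouche&Shallit:2003}, which says every automatic sequence has complexity $O(n)$ — rather than reprove it. For the primitive morphic case, I would quote the classical theorem (due to Pansiot, building on Ehrenfeucht--Rozenberg, and also available via the general theory of linearly recurrent words) that a fixed point of a primitive morphism that is not ultimately periodic has complexity $\Theta(n)$, and then observe that applying a coding can only decrease the number of factors of each length, so a primitive morphic word — being a coding of such a fixed point — also has complexity $O(n)$. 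Combined with the Morse--Hedlund lower bound, this gives $\Theta(n)$.

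For this last step I should be slightly careful: a coding $\tau$ applied to a word $\bf u$ of complexity $\Theta(n)$ gives $\tau({\bf u})$ of complexity $O(n)$, but I must also make sure $\tau({\bf u})$ is still not ultimately periodic — which is exactly the hypothesis of the theorem — so that the $n+1$ lower bound applies to $\tau({\bf u})$ directly. Similarly, in reducing "primitive morphic" to a fixed point of a primitive morphism, one must recall that the fixed point itself (on the cleaned-up alphabet) is genuinely infinite and one may need to pass to a power of the morphism; none of this affects the $\Theta(n)$ conclusion since passing to $h^m$ does not change the underlying infinite word or its complexity.

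The main obstacle is not conceptual but one of references and hypotheses: the cleanest route is to cite the two known complexity theorems (automatic $\Rightarrow$ $O(n)$; primitive substitution fixed point, aperiodic $\Rightarrow$ $\Theta(n)$) and then do the short monotonicity-under-coding and Morse--Hedlund bookkeeping. If a self-contained argument is preferred, the harder half is the primitive case, where one would need to reproduce the return-word / bounded-gap argument showing that in a uniformly recurrent word arising from a primitive morphism the number of return words to any factor is bounded, which yields the linear upper bound; but since Theorem~\ref{cobham}(a) and the surrounding machinery are already available, I would lean on the literature here.
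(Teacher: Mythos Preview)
Your proposal is correct and takes essentially the same approach as the paper: both treat this as a proof by citation, handling the uniform morphic case via Cobham's results on automatic sequences and the primitive morphic case via the classical substitution-complexity literature (Michel, Pansiot). If anything, your write-up is more explicit than the paper's --- you spell out the Morse--Hedlund lower bound and the monotonicity-under-coding step, whereas the paper simply points to \cite{Cobham:1972} for the uniform case and \cite{Michel:1975,Michel:1976a,Pansiot:1984a} for the primitive case without further comment.
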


\begin{proof}
For uniform morphic words, see \cite[Thm.~2, p.~171]{Cobham:1972}.
For primitive morphic words, see \cite{Michel:1975,Michel:1976a,Pansiot:1984a}.
\end{proof}

\section{The examples}

We now turn to providing examples of all of the 20 possibilities listed
in the previous section.

\begin{example}  {\bf (a) A word that is neither morphic nor
recurrent.}  

Consider the binary word
$01100010000000000000000010 \cdots$
that is the characteristic sequence of the factorials
$1, 2, 6, 24, \ldots$.   
This word has
$n = O((\log N)/(\log\log N))$ $1$'s in a prefix of length $N = n! + 1$,
and so by Proposition~\ref{positions}
it cannot be morphic.
It is evidently not recurrent because the factor $11$ appears only once.
\end{example}

\begin{example} {\bf (b) A word that is recurrent,
but neither morphic nor uniformly recurrent.}

Consider the
binary word 
$$ {\bf b} = 11011100101110111 \cdots$$
formed by the concatenation of the binary expansions of
$1, 2, 3, \ldots$ in order.   
This word clearly has $2^n$ distinct factors of length $n$, and hence
by Theorem~\ref{subword}
cannot be morphic.
Nor is $\bf b$ uniformly recurrent, because it contains arbitrarily long
blocks of $0$'s.  However, it is recurrent.
\end{example}

\begin{example}  {\bf (c) A word that is uniformly recurrent,
but not morphic.}

Consider the
Sturmian characteristic word
${\bf s}_{\alpha} = s_0 s_1 s_2 \cdots$ defined by
$s_n = \lfloor (n+1) \alpha \rfloor - \lfloor n\alpha \rfloor$
for $n \geq 0$.  It is well-known that all such words are uniformly
recurrent, see, e.g., \cite[Proposition 3.17, p.~186]{Rigo:2014}.

However, for ${\bf s}_{\alpha}$ to be morphic, the number $\alpha$ must be
a quadratic irrational \cite[Prop.~2.11]{Berthe&Holton&Zamboni:2006}.
So take $\alpha = \pi$, for example.
\end{example}

\begin{example}
{\bf (d) A word that is morphic; but neither pure morphic, uniform morphic,
primitive morphic, nor recurrent.}

Take the Fibonacci word mentioned above in Example~\ref{fib}, and
change the first two symbols to $2$, giving the word
$$ {\bf f}' = 22001010 \cdots .$$
Now ${\bf f}'$ cannot be pure morphic, because if it were, then we would
be able to write it as $h^\omega(2)$ where $h(2)$ begins with $22$.
Then iterating $h$ would produce infinitely many $2$'s, a contradiction.
This word cannot be uniform morphic by 
Theorem~\ref{freq}, because the frequency of the
symbol $0$ is the same as that in the Fibonacci word, namely
$(\sqrt{5} -1)/2$, which would contradict Theorem~\ref{freq}.  Finally,
${\bf f}'$ cannot be primitive morphic by Theorem~\ref{cobham} because 
the symbol 2 only occurs twice, and so ${\bf f}'$ is not recurrent.

However, the word ${\bf f}'$ is morphic.  It is generated by
the morphism $a \rightarrow ab$, $b \rightarrow c$, 
$c \rightarrow cd$, $d \rightarrow c$, followed by the coding
$abcd \rightarrow 2201$.
\end{example}

\begin{example}
{\bf (e) A word that is 
morphic and recurrent; but neither pure morphic, uniform morphic,
primitive morphic, nor uniformly recurrent.}

Consider the fixed point $\bf x$ of the morphism
$a\rightarrow ababb$, $b \rightarrow bc$, $c \rightarrow c$
coded by $\tau$ sending $a\rightarrow 0$ and $b,c\rightarrow 1$.
An easy induction gives
$$ {\bf x} = 0101111010111111111010111101 \cdots =
\prod_{n \geq 1} 0 1^{a(n)} $$
where $a(n) = (\nu_2 (n)+1)^2$ and
$\nu_2 (n)$ is the exponent of the highest power of $2$ dividing $n$.

Then $0 1^n 0 $ occurs in $\bf x$ 
if and only if $n$ is a perfect square.

Following the construction in \cite{Deviatov:2008}, it can be shown that
$\bf x$ has subword complexity $\Theta(n \sqrt{n})$,
so by Theorem~\ref{uniprim} $\bf x$ is not uniform morphic or
primitive morphic.  By Theorem~\ref{subword} 
it is not pure morphic.   It is recurrent but not uniformly recurrent.
\end{example}

\begin{example}
{\bf (f) A word that is
primitive morphic, but neither pure morphic nor uniform morphic.}

From a theorem of Yasutomi \cite{Yasutomi:1997,Berthe&Ei&Ito&Rao:2007},
we know that the Sturmian word
${\bf s}_{\alpha, \rho} = s_0 s_1 s_2 \cdots$ defined by
$s_n = \lfloor (n+1) \alpha + \rho \rfloor - \lfloor n\alpha + \rho \rfloor$
is pure morphic if and only if $\alpha$ is a quadratic
irrational, $\rho \in \Que(\alpha)$, and either
$\alpha' > 1$, $1 - \alpha' \leq \rho' \leq \alpha'$ or
$\alpha' < 0$, $\alpha' \leq \rho' \leq 1- \alpha'$
where $\alpha'$ is the (algebraic) conjugate of $\alpha$.

Now consider the case where $\alpha = (3-\sqrt{5})/2$.  Then 
$\alpha' = (3+\sqrt{5})/2 > 1$.   Take $\rho = 2\alpha = 3-\sqrt{5}$.
Then $\rho' = 3+\sqrt{5} > \alpha'$, so ${\bf s}_{\alpha, \rho}$ is not
pure morphic.  However, this word is just the shift of the Fibonacci
word $\bf f$, and hence is morphic.  In fact, it is easy to see
that ${\bf s}_{\alpha, \rho}  = \tau(h^\omega(a))$, where
$h: a\rightarrow ac, c \rightarrow b, b \rightarrow ac$ and
$\tau(abc) = 100$.  Then $h$ is primitive, as $h^3$ applied to
each letter contains
every letter.  So ${\bf s}_{\alpha, \rho}$ is primitive morphic.
It cannot be uniform morphic by Theorem~\ref{freq}.
\end{example}

\begin{example}
{\bf (g) A word that is uniform morphic;
but neither pure morphic, primitive morphic, nor recurrent.}

As is well known, the morphism $g$ defined by
$2 \rightarrow 210, 1 \rightarrow 20, 0 \rightarrow 1$
generates a squarefree word $g^\omega (2) = 210201\cdots$;
see \cite{Berstel:1979a}.  
Now consider the morphism $h$ defined by
\begin{eqnarray*}
a & \rightarrow & ab \\
b & \rightarrow & ca \\
c & \rightarrow & cd \\
d & \rightarrow & ac 
\end{eqnarray*}
and the coding $\tau$ defined by $\tau(abcd) = 2101$.
Then it is known that $\tau(h^\omega(a)) = g^\omega(2)$; see
\cite{Berstel:1979a}.  Hence $g^\omega(2)$ is $2$-automatic and
hence uniform morphic.

Now take the word $g^\omega(2)$
and change the first $1$ to $2$, obtaining the word
${\bf w} = 220201210120\cdots$.
The resulting word $\bf w$
is $2$-automatic since $g^\omega(2)$ is.
(In fact, it is generated by iterating the morphism
\begin{align*}
0 & \rightarrow 01 & \quad & 1 \rightarrow 23 \\
2 & \rightarrow 24 & \quad & 3 \rightarrow 35 \\
4 & \rightarrow 32 & \quad & 5 \rightarrow 23 
\end{align*}
followed by the coding
$012345 \rightarrow 220211$.)

But $\bf w$ is not pure morphic.  Suppose ${\bf w} = \xi^\omega (2)$.
Then $\xi$ maps $2$ to a word beginning with $22$, 
which means that $22$ occurs infinitely often in $\bf w$,
a contradiction.

Finally, $\bf w$ is neither primitive morphic  nor recurrent
because $22$ only occurs once.
\label{example15}
\end{example}

\begin{example}
{\bf (h) A word that is uniform morphic and recurrent;
but neither pure morphic nor primitive morphic.}

The idea is to construct a word that contains a nonzero
but finite number of $k$-th powers. Such a word cannot be pure uniform
morphic, and it can only be pure morphic if the $k$-th powers are
made up of non-growing letters.

Consider ${\bf v}$,
the fixed point of the morphism $a \rightarrow abba$, $b\rightarrow bccb$,
$c \rightarrow cbbc$.  Apply  the coding
$abc \rightarrow 001$ to get
$$ {\bf x} = 0000011001100000011010011001011001101001100101100000 \cdots .$$

Clearly $\bf x$ is uniform morphic and recurrent.  However, 
$000$ occurs with unbounded gaps between occurrences,
since arbitrarily long factors of Thue-Morse occur, so $\bf x$ is
not uniformly recurrent.

If $\bf x$ were pure morphic, it would
be fixed by some morphism $f$ prolongable on $0$,
then it would start with $f(0)^5$, since $f(0)$ starts with $000001$.
But this is not possible, because there is no factor $000001u000001u0000$.
To see this, note that $000001$ occurs only at positions that are a
multiple of 4, and once $000001u$
is synchronized modulo 4, it can be factored into $\{0000, 0110, 1001 \}$
and decoded.  An occurrence of
$000001u000001u0000$ then
corresponds to an overlap in $\bf v$, which is
impossible since $\bf v$ is overlap-free (applying the coding
$abc \rightarrow 010$
gives the Thue-Morse sequence).

\bigskip

An alternative construction creates a word with
unbounded powers, but not of the right kind:
Consider the fixed point of $a \rightarrow aba$, $b \rightarrow ccc$,
$c \rightarrow ccc$
coded by $a,b \rightarrow 0$, $c \rightarrow 1$, generating the
word ${\bf y} = 
000111000111111111000111000 \cdots$.

As in the previous example,
$\bf y$ is uniform morphic, recurrent, and not uniformly recurrent.
If it were pure morphic, fixed by some $f$, then it would start with $f(0)^3$,
since $f(0)$ starts with $000$. But $000u000u000$ does not occur: if $k$ is
the size of the largest block of ones in $u$ (it is a power of 3),
then we have two occurrences of $01^k0$ without a larger block of ones
between them, which is always the case. In other words: our word
is the coding of $010201030102 \cdots$ 
(the ``ruler sequence'' $(\nu_2 (n))_{n \geq 1}$)
under $i \rightarrow 0001^{3^{i+1}}$.
An occurrence of $000u000u000$ could be decoded
into a square in $010201030102 \cdots$, which is square-free.
This ends our second construction for this case.
\end{example}

\begin{example}
{\bf (i) A word that is uniform primitive morphic; but not pure morphic.}

Here our example is $\bf r$, the Rudin-Shapiro sequence.    The morphism
and coding given in Example~\ref{rudin} show that $\bf r$ is uniform morphic.  
Assume it is pure morphic, generated by a morphism $g$.  Since $\bf r$ starts 
with $000$, it must be that $g({\bf r}) = {\bf r}$ contains arbitrarily large cubes 
(namely, $g^n (0) g^n(0) g^n(0)$ for all $n \geq 1$).  But from a well-known result
\cite{Allouche&Bousquet-Melou:1994b} (also see \cite{Kao&Rampersad&Shallit&Silva:2007}), 
the only cubes in $\bf r$ are $000$ and $111$, a contradiction.
\end{example}

\begin{example}
{\bf (j) A word that is pure morphic; but neither uniform morphic,
primitive morphic, nor recurrent.}

Take the Fibonacci sequence $\bf f$ discussed above in
Example~\ref{fib} and change the first symbol from $0$ to $2$.
The resulting sequence ${\bf u} = 21001010\cdots$ is pure morphic,
since it is generated by the morphism that sends
$2 \rightarrow 21$, $1 \rightarrow 0$, $0 \rightarrow 01$.  However,
it is not automatic by
Theorem~\ref{freq} since $0$ occurs with frequency $(\sqrt{5}-1)/2$,
which is irrational.  
It is neither recurrent nor primitive morphic since
$2$ occurs only once in ${\bf u}$.
\end{example}

\begin{example}
{\bf (k) A word that is pure morphic and recurrent;
but neither uniform morphic, primitive morphic, nor uniformly recurrent.}

Consider the morphism $h$ defined by $h(0) = 010$ and $h(1) = 11$.
Then $h^\omega(0)$ is evidently pure morphic.  
It is evidently recurrent because any block
that appears must appear in $h^n (0)$ for some $n$,
and then that block appears twice in $h^n(010) = h^{n+1} (0)$.

It is not uniformly recurrent because there are arbitrarily long blocks
of $1$'s.  So it is also not primitive morphic.

Suppose it is $k$-automatic for some $k$.
Using the ``logical'' approach to automatic
sequences \cite{Bruyere&Hansel&Michaux&villemaire:1994},
the sequence ${\bf u} = (u_n)_{n \geq 0}$ defined by
$$ u_n = \begin{cases}
1, & \text{if the position of the
first occurrence of a block of $n$ consecutive $1$'s in
$h^\omega(0)$} \\
& \text{is not the position of the first occurrence of a 
block of $n+1$ consecutive $1$'s}; \\
0, & \text{otherwise;}
\end{cases}
$$
is also $k$-automatic.  But this sequence (it is easy to see) is
the characteristic sequence of powers of $2$.  So we can assume $k = 2$.

Again, using the ``logical" approach,
the function $f(n)$ computing the starting
position of the first occurrence of a block of $n$ consecutive $1$'s
in the word is ``$k$-synchronized" \cite{Goc&Schaeffer&Shallit:2013},
and hence by a theorem about $k$-synchronized
sequences, we have $f(n) = O(n)$.  But it is not hard to see that in fact
$f(n) = g( \lceil \log_2 n \rceil )$,
where $g(n) = (n+2)\cdot 2^{n-1} + 1$.
So $f(n) \not\in  O(n)$, a contradiction.
\end{example}

\begin{example}  {\bf (l) A word that
is pure morphic and primitive morphic; but neither uniform morphic 
nor pure primitive morphic.}

The {\it Chacon morphism} is defined by the
map $c:  0 \rightarrow 0012, 1 \rightarrow 12, 2 \rightarrow 012$
(see \cite{Ferenczi:1995} and \cite[p.\ 133, \S 5.5.1]{Fogg:2002}).
Iterating $c$ on $0$ gives the infinite word
${\bf C} := c^\omega(0) = 0012001212012\cdots$.
The morphism $c$ is primitive,
as $c^2$ applied to each letter contains every letter.
Now consider the coding $\tau:012 \rightarrow 010$ 
Applying $\tau$ to $\bf C$ gives the
word ${\bf D} = \tau({\bf C}) = 0010001010010 \cdots$, which satisfies
${\bf D} = \delta^\omega(0)$, where $\delta:0 \rightarrow 0010, 1 \rightarrow 1$.
It follows that $\bf D$ is pure morphic and primitive morphic.

We first show that $\bf D$ is not pure primitive morphic.
Let $h$ be {\it any\/} morphism such that $h({\bf D}) = {\bf D}$.
If $h(0)=0$, we will show that $h(1)=1$. Suppose $h(1)=u$.  If
$u = \epsilon$, then $\bf D$ would be periodic, which it is not.
Otherwise, if $u$ is nonempty and not equal to $1$, then
$u$ begins with $1000$ (since ${\bf D}$ begins $001000\cdots$)
and must end with $1$ (since $u$ can be followed by $000$).
Now if $u$ ends in $001$, then since $u$ can be followed by $00u$ and 
hence by $001$ we get a contradiction,
since $001001$ is not a factor of $\bf D$.

If $u$ ends in $101$, then since $u$ can be followed by $0u$ and
hence by $01$ we get a contradiction, since $10101$
is not a factor of $\bf D$. So $u=1$.

Now let $h$ be a primitive morphism with $h({\bf D}) = {\bf D}$.
If $uu$ is a prefix of $\bf D$ then either $u=0$ or $u=t^n(0)$.
This is easy to see by induction on $|u|$. If $|u|>1$, then $u$ begins with 
$0010$,
and since every occurrence of $0010$ in $\bf D$ comes from $t(0)$,
it follows that $u =t(u')$ for some prefix $u'$ of $\bf D$. 
So $\bf D$ begins with $t(u'u')$ and hence begins with $u'u'$.

Now since $\bf D$ begins with $h(0)h(0)$ it follows that $h(0)=t^n(0)$
for some $n>0$ (if $h(0)=0$, then $h$ would not be primitive). So 
$h(0)$ begins with $0010$ and so $h(1) =t(u)$ for some $u$. Note that $u$
must contain $0$ and $1$. We can suppose that $|h(1)|$
is minimal among all
primitive morphisms $h$ fixing $\bf D$.  So
${\bf D} =t^n(0)t^n(0)t(u)t^n(0)t^n(0)t^n(0)t(u)\cdots$ so
${\bf D}=t^{n-1}(0)t^{n-1}(0)ut^{n-1}(0)t^{n-1}(0)t^{n-1}(0)u \cdots$. By
minimality of $|h(1)|$ it follows that the morphism
$h': 0 \rightarrow t^{n-1}(0)$
and $1 \rightarrow u$ is not primitive.
Since $u$ contains both $0$ and $1$ the only way
this can fail to be primitive is if $n=1$, i.e., $h(0)=0$. But now the
argument above completes the proof.

Finally, we prove that for all $k \geq 2$,
the word $\bf D$ is not $k$-automatic.
By the argument in \cite[Thm.\ 3.1]{Allouche&Allouche&Shallit:2006}, it
suffices to prove that ${\bf D} = d(0) \, d(1) \, d(2) \, \cdots $
is not $3$-automatic.  Suppose it were.  Then, by a well-known result
(e.g., \cite[Cor.\ 5.3.3]{Allouche&Shallit:2003}) the sequence 
$(d(x_n))_{n \geq 0}$ is ultimately periodic, where
$x_n = [(20)^n]_3$.  We will show it is not.

Note that $x_n = (3^{2n+1}-3)/4$.  An easy induction
shows that $|\delta^n (0)| = (3^{n+1}-1)/2$ and hence
$|\delta^{2n-1}(0) \delta^{2n-2}(0) \cdots \delta^2(0) \delta(0) 0| = 
\sum_{0 \leq i \leq 2n-1} |\delta^i(0)| = \sum_{0 \leq i \leq 2n-1} (3^{i+1}-1)/2
= (3^{2n+1}-3)/4 - n = x_n - n$.
Now another easy induction gives
$$\delta^{2n} (0) =
\delta^{2n-1}(0) \delta^{2n-2}(0) \cdots \delta(0)\, 0010 \, 1 \, \delta(0)\, 1 \, \delta^2(0)\,  1 \, \cdots 1 \,
\delta^{2n-1} (0).$$
It follows that $d(x_n)$ is the $n$'th symbol of the
infinite word ${\bf w} = 0101\delta(0) 1 \delta^2(0) 1 \delta^3(0) \cdots$.
However, $\bf w$ is not ultimately periodic.  If it were, then its subword
complexity $\rho$ would satisfy $\rho(n) \leq n$ for some $n$; however
$\bf w$ contains every prefix of $\bf D$ as a factor, and the subword
complexity of $\bf D$ is well-known to be $2n+1$, a contradiction \cite{Ferenczi:1995}.
\end{example}

\begin{example}
{\bf (m) A word that is pure primitive morphic; but not uniform morphic.}

The Fibonacci word $\bf f$ discussed above in Example~\ref{fib} is generated
by the morphism $0 \rightarrow 01;\ 1 \rightarrow 0$ and hence is pure 
primitive morphic.  But it is not automatic, as already mentioned above
in Example~\ref{fib}, and hence not uniform morphic.
\end{example}

\begin{example}
{\bf (n) A word that is pure morphic and uniform morphic, but neither pure
uniform morphic, nor primitive morphic, nor recurrent.}

Consider the morphism $h$
defined by 
\begin{eqnarray*}
3 &\rightarrow& 32 \\ 
2 &\rightarrow& 102012 \\ 
1 &\rightarrow& 1012 \\ 
0 &\rightarrow& 02 \end{eqnarray*}
An
easy induction shows that $h^{n+1} (3) = 3 g^{2n} (2) g^{2n-2}(2)
\cdots g^2 (2) 2$, where $g$ is the morphism defined in
Example~\ref{example15}.  Letting $n \rightarrow \infty$, we see that
$h^\omega(3) = 3 g^\omega(2)$.  Define ${\bf w} = h^\omega(3)$; then we
claim $\bf w$ has the desired properties.  It is clearly pure morphic,
and it is $2$-automatic because $g^\omega(2)$ is (as remarked above in
Example~\ref{example15}), and automatic sequences are closed under
shift (see, e.g., \cite[Theorem 6.8.4]{Allouche&Shallit:2003}).  In
fact, $\bf w$ is the image under the coding $\rho$ of the fixed point,
starting with $a$, of $\delta$, where $\delta(a) = ab$, $\delta(b) =
cd$, $\delta(c) = bd$, $\delta(d) = eb$, and $\delta(e) = db$, and
$\rho(abcde) = 32101$.

However, $\bf w$ is not primitive morphic because if it were,
it would be uniformly
recurrent. But $3$ only appears once, so it is not even recurrent,
a contradiction.

Finally, $\bf w$ is not pure
uniform morphic.   Suppose it is generated by a $k$-uniform morphism $f$.
If $k$ is multiplicatively independent of $2$, then $\bf w$ is both
$2$-automatic and $k$-automatic, and so by Cobham's theorem
\cite{Cobham:1969} it is ultimately periodic, a contradiction.

Therefore $k$ is multiplicatively dependent on $2$, and hence $k = 2^n$
for some $n \geq 1$.    
But now ${\bf w}[2] = 1$ and ${\bf w}[6] = 1$.  If $\bf w$ 
were the fixed point of the $2^n$-uniform morphism $f$
we would have the image of ${\bf w}[2]$ under $f$, which is
${\bf w}[2\cdot 2^n..3 \cdot 2^n-1]$, equal to the image of 
${\bf w}[6]$ under $f$, which is
${\bf w}[6 \cdot 2^n..7\cdot 2^n - 1]$.
However, from our description above we have
${\bf w} = \rho(\delta^\omega(a))$.
Since $\delta^\omega(a)$ begins with $abcdbdeb$,
it follows that ${\bf w}[2\cdot 2^n..3 \cdot 2^n-1] = \rho(\delta^n(c))$
and ${\bf w}[6 \cdot 2^n..7\cdot 2^n - 1] = \rho(\delta^n(e))$.
However $\rho(\delta^n(c))$ begins with $20$ if $n$ is odd and $10$ if
$n$ is even, whereas
$\rho(\delta^n(e))$ begins with $02$ if $n$ is odd and $12$ if $n$ is even,
a contradiction.
\label{ex20}
\end{example}

\begin{example}
{\bf (o) A word that is pure morphic and uniform morphic and recurrent,
but neither pure uniform morphic nor primitive morphic.}

Consider the fixed point of $f$: $a \rightarrow abcda$,
$b \rightarrow bcdee$, and $c,d,e \rightarrow eeeee$, followed by the
coding by $g:abcde \rightarrow 01123$.
The resulting word $\bf q$ starts
$$ 01120112333333333333011201123^{62}0 \cdots.$$
It is the fixed point of $h$ defined by
$0 \rightarrow 01120$, $1 \rightarrow 1$,
$2 \rightarrow 2333333333333$, $3 \rightarrow 33333$.
(observe that $hgf = gf^2$).
If $\bf q$ were fixed by a non-trivial uniform morphism $j$, then by 
Cobham's theorem \cite{Cobham:1969}
the length of $j$ would be $5^k$ for some $k>0$. Then we would have
$jg = gf^k$.
But $(gf^k)(b)$ starts with $1$, while $(gf^k)(c)$ starts with 
$3$, a contradiction.

The word $\bf q$ is not uniformly recurrent, because it has arbitrarily
long blocks of $3$'s.
\end{example}

\begin{example}
{\bf (p) A word that is pure morphic and uniform primitive morphic;
but neither pure uniform morphic nor pure primitive morphic.}

Let ${\bf u} = acbcbcacbcacacbc \cdots$
be the image of the Thue-Morse word $\bf t$ under the
morphism $0 \rightarrow ac$ and $1\rightarrow bc$.
Then $\bf u$ is pure morphic, because it is generated
by the morphism $a \rightarrow acb$, $b \rightarrow bca$, $c \rightarrow c$.

Now $\bf u$ is uniform primitive morphic because it is the image, 
under the coding $0123 \rightarrow acbc$ 
of the fixed point of
the word generated by the morphism $\eta$ defined
by $0 \rightarrow 01$, $1 \rightarrow 23$,
$2 \rightarrow 23$, $3 \rightarrow 01$.  Note that
$\eta^2$ applied to each letter contains every letter.

However, $\bf u$ is neither pure uniform morphic, nor pure primitive morphic.
To see this,
assume ${\bf u} = f^\omega (a)$ for some morphism $f$.
If $f$ is primitive or $k$-uniform, with $k \geq 2$, then
$f(c)$ is neither $c$ nor $\epsilon$.  Since $\bf u$ contains no 
occurrence of the factor $cc$, it must be that $f(c)$
contains an occurrence of $a$ or $b$.
Now $\bf u$ can be factored over $\{f(acbc), f(bcac) \}$, which are words
of the same length that have this $a$ or $b$ occurring
at the same position.
This implies that there is an arithmetic sequence of indices
on which $\bf u$ is constantly $a$ or constantly $b$, so the Thue-Morse
word $\bf t$ has the same property, which is not true: any sequence
extracted from Thue-Morse by indexing from an arithmetic progression
contains both $a$'s and $b$'s \cite{Gelfond:1968}.
\end{example}

\begin{example}
{\bf (q) A word that is pure primitive morphic and uniform primitive morphic;
but not pure uniform morphic.}

Let $\bf T$ be the word generated by the morphism $g$ mentioned
in Example~\ref{example15}:  $g$ maps
$2 \rightarrow 210; \ 1 \rightarrow 20;\ 0 \rightarrow 1$.
Then, as in Example~\ref{example15}, the word $\bf T$ is $2$-automatic, and the
underlying morphism is primitive, so it is uniform primitive morphic.

Suppose $\bf T$ is pure uniform morphic.  Then it is generated by
iterating a $s$-uniform morphism for some $s \geq 2$.  If $s$ is not
a power of $2$, then $\bf T$ is both $2$-automatic and $s$-automatic
where $2$ and $s$ are multiplicatively independent.  Hence by
Cobham's theorem \cite{Cobham:1969}, $T$ is ultimately periodic.
But in fact $T$ is a well-known squarefree word arising from the
Thue-Morse sequence \cite{Berstel:1979a}.  So $\bf T$ must be generated by
iterating a morphism $h$  that is $2^k$-uniform for some $k \geq 1$.
In this case, Berstel has shown that this is impossible \cite{Berstel:1979a},
because then $\bf T$ and $g({\bf T})$ differ at the position $5 \cdot 2^k$.
\end{example}

\begin{example}
{\bf (r) A word that is pure uniform morphic; but neither primitive morphic
nor recurrent.}

The word generated by the morphism $a \rightarrow ab;\ b \rightarrow bc; \ c \rightarrow cc$
iterated on $a$.  This is clearly pure uniform morphic.  
However, since $a$ only appears once, it is not recurrent and thus
cannot be primitive morphic.
\end{example}

\begin{example}
{\bf (s) A word that is pure uniform morphic and recurrent;
but not primitive morphic.}

Here we can take the word that is the fixed point of the
morphism $0 \rightarrow 010$ and $1 \rightarrow 111$, as in
\cite{Cassaigne:2001}.  This is evidently pure uniform morphic and recurrent,
but as there are arbitrarily long blocks of $1$'s, it cannot be
uniformly recurrent, and hence it is not primitive morphic.
\end{example}

\begin{example}
{\bf (t) A word that is pure uniform primitive morphic.}  The Thue-Morse word
$\bf t$.

\end{example}

\section{Final remarks}

None of the 20 examples we provided are ultimately periodic.

One might ask whether every morphic word can be generated by a coding
applied to a non-uniform morphism.  The answer is yes:
it suffices to prove this for uniform morphic
words, which is done in \cite{Allouche&Shallit:2009}.

We thank Dirk Nowotka for his helpful comments.

\newcommand{\noopsort}[1]{} \newcommand{\singleletter}[1]{#1}

\end{document}